\theoremstyle{theorem}
\newtheorem{theorem}{Theorem}
\newtheorem{corollary}[theorem]{Corollary}
\newtheorem{lemma}[theorem]{Lemma}
\theoremstyle{definition}
\newtheorem{definition}{Definition}
\newtheorem{example}{Example}
\newtheorem{remark}{Remark}
\def\dmu{\mathrm{d}\mu}
\def\P{\mathcal{P}}
\def\F{\mathcal{F}}
\def\D{\mathcal{D}}
\newcommand{\eqdef}{:=}
\newcommand{\argmin}{\mathop{\rm arg~min}\limits}
\title{Convex Optimization on Functionals of Probability Densities}
\author{Tomohiro Nishiyama \\ Email: htam0ybboh@gmail.com}
\begin{document} 
\maketitle
\bibliographystyle{plain}
\begin{abstract}
In information theory, some optimization problems result in convex optimization problems on strictly convex functionals of probability densities.
In this note, we study these problems and show conditions of minimizers and the uniqueness of the minimizer if there exist a minimizer.

\end{abstract}
\noindent \textbf{Keywords:} convex optimization, entropy, divergence, Lagrangian

\section{Introduction}
In information theory, major quantities such as the Shannon entropy \cite{cover2012elements}, the relative entropy (Kullback-Leibler divergence) \cite{kullback1951information} and $f$-divergence \cite{csiszar1967information} are strictly convex or concave functionals of probability densities \cite{nishiyama2019monotonically}. Optimizing these quantities under some constraints are important problems in various fields including information theory, machine learning, physics, and finance. Some optimization problems result in convex optimization problems on strictly convex functionals of probability densities. For example, the negative Shannon entropy is a strictly convex functional and minimization of the negative Shannon entropy under linear constraints gives one of the results of the maximum entropy method \cite{conrad2004probability}.                                                                   
 
Csisz{\'a}r and Mat{\'u}\v{s} \cite{csiszar2008minimization} studied minimization problems of strictly convex integral functionals of probability densities under linear equality constraints. In our previous note \cite{nishiyama2020minimization}, we studied minimization problems of strictly convex functionals of probability densities under a specified inequality constraint (divergence balls) and a linear equality constraints.   
 
In this note, we generalize these results and discuss convex optimization problems \cite{boyd2004convex} of strictly convex functionals of probability densities.
We show conditions of minimizers and the uniqueness of the minimizer if there exist a minimizer. Furthermore, we show an application example of the result and some examples of the strictly convex functionals.

\section{Preliminaries}
This section provides definitions and notations which are used in this note.
Let $(\Omega, \F)$ be a measurable space where $\Omega$ denotes the sample space and $\F$ denotes the $\sigma$-algebra of measurable events.
Let $\mu$ be a dominating measure of probability measures $P$ (i.e., $P\ll \mu$), and let $p\eqdef \frac{dP}{d\mu}$ denote the $\mu$-densities of $P$. 
Let $\P$ be the set of probability densities, and let $p=q$ denote $p=q \mbox{ a.s. }$.

\begin{definition}[Strictly convex functional]
\label{def_strictly_functional}
Let $p,q\in\P$ and $p\neq q$.
The functional $F[p]: \P\rightarrow \mathbb{R}$ is strictly convex if
\begin{align}
(1-t)F[p] +t F[q] > F[(1-t)p+t q].
\end{align} 
for all $t\in(0,1)$
\end{definition}

\begin{definition}[Convex functional]
\label{def_convex_functional}
Let $p,q\in\P$.
The functional $G[p]: \P\rightarrow \mathbb{R}$ is convex if
\begin{align}
\label{def_eq_strictly_functional1}
(1-t)G[p] +t G[q] \geq G[(1-t)p+t q].
\end{align} 
for all $t\in[0,1]$
\end{definition}

\begin{definition}[Differentiable functional]
Let $p\in\P$ and let $F[p]: \P\rightarrow \mathbb{R}$ be a functional.
The functional $F[p]$ is differentiable if the functional derivative \cite{engel2013density} exists with respect to $p$.
The functional derivative of $F[p]$ with respect to $p$, denoted $\frac{\delta F[p]}{\delta p(z)}(p(z), z)$, is defined as
\begin{align}
\int \frac{\delta F[p]}{\delta p(z)}(p(z), z) \eta(z) \dmu(z) \eqdef \left.\frac{d}{d\epsilon}D[p+\epsilon \eta]\right|_{\epsilon=0},
\end{align}
where $\eta$ is an arbitrary function and the integral is defined on $\Omega$.
\end{definition}
We define $\frac{\delta F[p]}{\delta p(z)}(0, z)$ and $\frac{\delta F[p]}{\delta p(z)}(+\infty, z)$ as $\lim_{p(z)\downarrow 0}\frac{\delta F[p]}{\delta p(z)}(p(z), z)$ and $\lim_{p(z)\uparrow+\infty}\frac{\delta F[p]}{\delta p(z)}(p(z), z)$.
\begin{remark}
Although we need to define the functional derivative by the Fr{\'e}chet derivative or the G{\^a}teaux derivative \cite{frigyik2008introduction} for a more rigorous mathematical discussion, we adopt the above definition for simplicity.
\end{remark}
 
\begin{definition}[Lagrangian]
Let $F[p]: \P\rightarrow \mathbb{R}$ be an objective functional and $\Phi_i[p] \, (i=1,2,\cdots, m): \P\rightarrow \mathbb{R}$ be affine and equality constraint functionals.

Let $\Psi_j[p] \, (j=1,2,\cdots, n): \P\rightarrow \mathbb{R}$ be inequality constraint functionals.
The Lagrangian $L: \P\times\mathbb{R}^{m+1}\times\mathbb{R}^n \rightarrow\mathbb{R}$ is defined as 
\begin{align}
L[p](\lambda,\nu)\eqdef F[p] + \sum_{i=1}^{m+1} \lambda_i \Phi_i[p]+\sum_{j=1}^n \nu_j \Psi_j[p],
\end{align}
where $\lambda =(\lambda_1,\lambda_2, \cdots, \lambda_{m+1})^T\in\mathbb{R}^{m+1}$ and $\nu =(\nu_1,\nu_2, \cdots, \nu_n)^T\in\mathbb{R}^n$ are the Lagrange multipliers and $\Phi_{m+1}[p]\eqdef\int p(z)\dmu(z)-1$, which corresponds to the constraint $\int p(z)\dmu(z)=1$.
\end{definition}

\section{Main results and their proofs}                                                                        
\subsection{Convex optimization of strictly convex functionals} 
Let $\Phi_i[p] \, (i=1,2, \cdots m)\eqdef \int \varphi_i(z)p(z)\dmu(z)-c_i$, where $\varphi_i:\Omega\rightarrow \mathbb{R}$ and $(c_1,c_2,\cdots, c_m)^T\in \mathbb{R}^m$.                                

Let $\Psi_j[p] \, (j=1,2,\cdots, n)$ be differentiable convex functionals.  

We define the feasible set as $\D\eqdef\{p\in\P|\Phi_i[p]= 0 \, (i=1,2,\cdots, m)\mbox{\rm{ and }}\Psi_j[p]\leq 0 \, (j=1,2,\cdots, n)\}$. We can easily confirm that $\D$ is convex.

Consider the optimization problem of differentiable strictly convex functional $F[p]$.
\begin{align}
\mbox{ \rm{ minimize }} F[p] \mbox{\rm{ subject to }}p\in\D.
\end{align}

\begin{theorem}
\label{th_optimization}
Let $\Lambda\eqdef \{z\in\Omega|\exists \hat{p}(z)\in[0,\infty), \frac{\delta L[p]}{\delta p(z)}(\hat{p}(z),z,\lambda,\nu)=0\}$, where $L[p](\lambda,\nu)$ is the Lagrangian.
Suppose that there exist $p^\ast\in \D$, $\lambda\in\mathbb{R}^{m+1}$, and $\nu\in\mathbb{R}^n$ such that: 
\begin{align}
\label{th_optimization_eq1}
\mbox{\rm{1. }} &p^\ast(z) = \hat{p}(z), \mbox{\rm{ if }} z\in\Lambda, \\ 
\label{th_optimization_eq2}
&p^\ast(z) = 0, \mbox{\rm{ and }} \frac{\delta L[p]}{\delta p(z)}(0,z,\lambda,\nu)>0, \mbox{\rm{ if }} z\in\Omega\setminus\Lambda, \\ 
\label{th_optimization_eq3}
\mbox{\rm{2. }}&\nu_j\Psi_j[p^\ast]=0, \mbox{\rm{ and }} \nu_j\geq 0, \: j=1,2,\cdots, n.
\end{align}
Then, 
\begin{align}
\label{th_optimization_eq4}
\argmin_{p\in \D} F[p]=p^\ast,
\end{align}
and $p^\ast$ is unique.
\end{theorem}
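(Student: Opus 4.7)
The plan is to run a KKT-style sufficiency argument by chaining, for every feasible $p\in\D$, the inequalities
\begin{align}
F[p^\ast]\;=\;L[p^\ast](\lambda,\nu)\;\leq\;L[p](\lambda,\nu)\;\leq\;F[p].
\end{align}
The rightmost inequality is immediate from feasibility: $\Phi_i[p]=0$ for every $i$ (including the normalization $\Phi_{m+1}[p]=0$), and $\nu_j\Psi_j[p]\leq 0$ since $\nu_j\geq 0$ and $\Psi_j[p]\leq 0$. The leftmost equality combines $p^\ast\in\D$ with the complementary-slackness hypothesis~(\ref{th_optimization_eq3}), so every extra term in the Lagrangian cancels at $p^\ast$.

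The work lies in the middle inequality $L[p^\ast]\leq L[p]$. Since $F$ is strictly convex, each $\Phi_i$ is affine, each $\Psi_j$ is convex, and $\nu_j\geq 0$, the functional $L[\cdot](\lambda,\nu)$ is strictly convex in $p$. I would invoke the standard first-order criterion for convex differentiable functionals,
\begin{align}
L[p]\;\geq\;L[p^\ast]+\int \frac{\delta L[p]}{\delta p(z)}\bigl(p^\ast(z),z,\lambda,\nu\bigr)\bigl(p(z)-p^\ast(z)\bigr)\dmu(z),
\end{align}
derived by applying the definition of the functional derivative to $L[(1-t)p^\ast+tp]\leq (1-t)L[p^\ast]+tL[p]$ and letting $t\downarrow 0$. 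Then I split $\Omega=\Lambda\cup(\Omega\setminus\Lambda)$. On $\Lambda$, condition~(\ref{th_optimization_eq1}) makes $p^\ast(z)=\hat{p}(z)$ a root of the derivative, so the integrand vanishes. On $\Omega\setminus\Lambda$, condition~(\ref{th_optimization_eq2}) gives $p^\ast(z)=0$ and $\frac{\delta L[p]}{\delta p(z)}(0,z,\lambda,\nu)>0$; since $p\in\P$ forces $p(z)\geq 0$, the integrand is nonnegative there as well. Hence the integral is $\geq 0$ and the sandwich closes, yielding $F[p^\ast]\leq F[p]$ for all $p\in\D$.

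For uniqueness I would use strict convexity of $F$ directly: if two distinct minimizers $p_1,p_2\in\D$ existed, the midpoint $(p_1+p_2)/2$ would still lie in the convex set $\D$, and strict convexity would give $F[(p_1+p_2)/2]<\tfrac{1}{2}F[p_1]+\tfrac{1}{2}F[p_2]=\min_{p\in\D}F[p]$, a contradiction. The main obstacle I anticipate is the careful bookkeeping on $\Lambda$ versus $\Omega\setminus\Lambda$: one has to verify that this dichotomy exhausts $\Omega$ measurably and that the strict inequality on the complement is exactly the sign information required to neutralize the boundary contribution from $p^\ast=0$. Once that is pinned down, the rest is routine manipulation of the Lagrangian and the convexity inequality.
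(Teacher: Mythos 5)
Your proof is correct and follows essentially the same route as the paper: the same sandwich $F[p^\ast]=L[p^\ast](\lambda,\nu)\leq L[q](\lambda,\nu)\leq F[q]$ built on the first-order inequality for convex differentiable functionals (which the paper isolates as Lemma \ref{lem_convex}), plus the identical strict-convexity midpoint argument for uniqueness. The only difference is presentational: the paper introduces explicit multipliers $\theta(z)$ for the constraints $p(z)\geq 0$ and a modified Lagrangian $\tilde{L}$ whose functional derivative vanishes identically at $p^\ast$, whereas you keep the first-order term for $L$ itself and check its sign pointwise on $\Lambda$ and $\Omega\setminus\Lambda$ --- the same computation in different packaging.
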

The condition 1., 2., and $p^\ast\in \D$ correspond to the Karush-Kuhn-Tucker (KKT) conditions (see, e.g., Chapter 5 in \cite{boyd2004convex}) as will be described in the next subsection.
\begin{remark}
When $F[p]$ is convex (including affine), the condition 1. and 2. give the minimizer conditions. However, the minimizer need not be unique.
\end{remark}
\begin{corollary}
\label{cor_optimization}
Let $\Lambda\eqdef \{z\in\Omega|\exists \hat{p}(z)\in[0,\infty), \frac{\delta L[p]}{\delta p(z)}(\hat{p}(z),z,\lambda,\nu)=0\}$, where $L[p](\lambda,\nu)$ is the Lagrangian.

Let $\frac{\delta L[p]}{\delta p(z)}(p(z),z,\lambda,\nu)$ is continuous with respect to $p(z)$ for all $z\in\Omega$.
Suppose that there exist $p^\ast\in \D$, $\lambda\in\mathbb{R}^{m+1}$, and $\nu\in\mathbb{R}^n$ such that: 
\begin{align}
\label{cor_optimization_eq1}
\mbox{\rm{1. }} &p^\ast(z) = \hat{p}(z), \mbox{\rm{ if }} z\in\Lambda, \\ 
\label{cor_optimization_eq2}
&p^\ast(z) = 0, \mbox{\rm{ and }} \frac{\delta L[p]}{\delta p(z)}(+\infty,z,\lambda,\nu)>0, \mbox{\rm{ if }} z\in\Omega\setminus\Lambda, \\
\label{cor_optimization_eq3}
\mbox{\rm{2. }}&\nu_j\Psi_j[p^\ast]=0, \mbox{\rm{ and }} \nu_j\geq 0, \: j=1,2,\cdots, n.
\end{align}
Then, 
\begin{align}
\label{cor_optimization_eq4}
\argmin_{p\in \D} F[p]=p^\ast,
\end{align}
and $p^\ast$ is unique.
\end{corollary}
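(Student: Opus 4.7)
The plan is to deduce Corollary \ref{cor_optimization} from Theorem \ref{th_optimization} by showing that, under the added continuity hypothesis, the positivity condition at $p(z)=+\infty$ already propagates to a positivity condition at $p(z)=0$. Once that reduction is in place, (\ref{cor_optimization_eq1})--(\ref{cor_optimization_eq3}) coincide with (\ref{th_optimization_eq1})--(\ref{th_optimization_eq3}) and the theorem can be invoked directly.

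The core step I would carry out is the following. Fix $z\in\Omega\setminus\Lambda$ and consider the one-variable function $g(t)\eqdef \frac{\delta L[p]}{\delta p(z)}(t,z,\lambda,\nu)$ defined on $[0,\infty)$. By the very definition of $\Lambda$, the assumption $z\notin\Lambda$ means $g(t)\neq 0$ for every $t\in[0,\infty)$. Because $g$ is continuous on the connected set $[0,\infty)$ by hypothesis, it must have constant sign there. I then read off its sign at the boundary using the preliminary convention $\frac{\delta L[p]}{\delta p(z)}(+\infty,z,\lambda,\nu)=\lim_{t\uparrow+\infty}g(t)$: hypothesis (\ref{cor_optimization_eq2}) says this limit is strictly positive, so $g(t)>0$ throughout $[0,\infty)$, and in particular $g(0)=\frac{\delta L[p]}{\delta p(z)}(0,z,\lambda,\nu)>0$.

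With this pointwise strengthening in hand on $\Omega\setminus\Lambda$, the hypotheses of Theorem \ref{th_optimization} are met verbatim, so its conclusion delivers both $\argmin_{p\in\D}F[p]=p^\ast$ and uniqueness. The only point that requires care is making sure the ``value at $+\infty$'' appearing in the corollary is genuinely compatible with the continuity hypothesis on $[0,\infty)$ through the limit convention established in the preliminaries; once this is noted, the intermediate-value/constant-sign argument is essentially trivial, and everything else is a direct citation of Theorem \ref{th_optimization}, which is why this auxiliary result is phrased as a corollary rather than as an independent theorem.
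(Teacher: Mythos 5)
Your proposal is correct and follows essentially the same route as the paper: both reduce the corollary to Theorem \ref{th_optimization} by using continuity of $t\mapsto\frac{\delta L[p]}{\delta p(z)}(t,z,\lambda,\nu)$ on $[0,\infty)$ together with the absence of a zero (from $z\notin\Lambda$) to conclude that positivity of the limit at $+\infty$ forces positivity at $0$. The paper phrases this as a contradiction via the intermediate value theorem, while you phrase it as a constant-sign argument on a connected set; these are the same idea.
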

\begin{example}[Maximization of R{\'e}nyi entropy]
We show an application example of Corollary \ref{cor_optimization}.
For $\alpha \neq 1$, the R{\'e}nyi entropy \cite{renyi1961measures} is defined as 
\begin{align}
H_\alpha(p)\eqdef \frac{1}{1-\alpha}\log\int p^\alpha\dmu.
\end{align}

For simplicity, we discuss the case $\alpha > 1$ and $\Omega =\mathbb{R}$.
Johnson and Vignat \cite{johnson2007some} showed that the following function gives the maximum of the R{\'e}nyi entropy with linear equality constraints $\int zp(z)\dmu(z)=0$ and $\int z^2p(z)\dmu(z)=\sigma^2$ (see Proposition 1.3 in \cite{johnson2007some}).
\begin{align}
g_\alpha(z)=A_\alpha\biggl(1-(\alpha-1)\beta \frac{z^2}{\sigma^2}\biggr)_+^{\frac{1}{\alpha-1}}
\end{align}
with 
\begin{align}
\beta=\frac{1}{3\alpha-1},
\end{align}
and 
\begin{align}
A_\alpha=\Gamma\biggl(\frac{\alpha}{\alpha-1}+\frac{1}{2}\biggr)(\beta(\alpha-1))^{\frac{1}{2}}/\biggl(\Gamma\biggl(\frac{\alpha}{\alpha-1}\biggr)\pi^\frac{1}{2}\sigma\biggr).
\end{align}
Here $x_+=\max(x,0)$ denotes the positive part.
Maximizing the R{\'e}nyi entropy is equivalent to minimizing $F[p]=\int p^\alpha \dmu$, which is strictly convex for $\alpha>1$.
We show that $g_\alpha$ is also a minimizer under inequality constraints such that $\Psi_1[p]=\int zp(z)\dmu(z)\leq 0$ and $\Psi_2[p]=\int z^2p(z)\dmu(z)-\sigma^2\leq 0$
The Lagrangian is 
\begin{align}
L[p](\lambda,\nu)=F[p]+\lambda_1\biggl(1-\int p(z)\dmu(z)\biggr)+\nu_1\int zp(z)\dmu(z)+\nu_2\biggl(\int z^2p(z)\dmu(z)-\sigma^2\biggr).
\end{align}
The solution of $\frac{\delta L[p]}{\delta p(z)}(p(z),z,\lambda,\nu)=\alpha p(z)^{\alpha-1}-\lambda_1+\nu_1 z+\nu_2 z^2=0$ is 
\begin{align}
\hat{p}(z)=\biggl(\frac{1}{\alpha}(\lambda_1-\nu_1 z-\nu_2 z^2)\biggr)^\frac{1}{\alpha-1}.
\end{align}
From (\ref{cor_optimization_eq1}) and (\ref{cor_optimization_eq2}), we obtain
\begin{align}
p^\ast(z)=\biggl(\frac{1}{\alpha}(\lambda_1-\nu_1 z-\nu_2 z^2)\biggr)_+^\frac{1}{\alpha-1},
\end{align}
and $\Lambda =\{z\in\mathbb{R}|\lambda_1-\nu_1 z-\nu_2 z^2\geq 0\}$.
Since $\lim_{p(z)\uparrow+\infty} p(z)^{\alpha-1}=+\infty$, $\frac{\delta L[p]}{\delta p(z)}(+\infty,z,\lambda,\nu)>0$ in (\ref{cor_optimization_eq2}) is satisfied.
By choosing $\lambda_1=\alpha A_\alpha^{\alpha-1}>0$, $\nu_1=0$ and $\nu_2=(\alpha-1)\beta\lambda_1/\sigma^2\geq 0$, $p^\ast$ satisfies $\Psi_1[p^\ast]=\Psi_2[p^\ast]=0$.
Hence, $p^\ast$, $\nu_1$, and $\nu_2$ satisfy (\ref{cor_optimization_eq3}) and $p^\ast=g_\alpha$ is the unique minimizer. 
\end{example}

\subsection{Proofs of main results}
\begin{lemma}
\label{lem_convex}
Let $G:\P\rightarrow \mathbb{R}$ be a differentiable convex functional and $p,q\in\P$.
Then,
\begin{align}
G[q]\geq G[p]+\int  \frac{\delta G[p]}{\delta p(z)}(p(z),z)(q(z)-p(z)) \dmu(z).
\end{align}
\end{lemma}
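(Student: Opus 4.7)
The plan is to reduce the statement to the one-variable first-order characterization of convexity by restricting $G$ to the line segment joining $p$ and $q$. Define $h:[0,1]\to\mathbb{R}$ by $h(t)\eqdef G[(1-t)p+tq]$. Since $\P$ is convex, $(1-t)p+tq\in\P$ for every $t\in[0,1]$, so $h$ is well defined, and Definition \ref{def_convex_functional} applied to the densities $p$ and $q$ immediately gives that $h$ is a convex function of $t\in[0,1]$.

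Next I would use this one-variable convexity to produce a difference quotient bound. For every $t\in(0,1]$, the inequality $h(t)\le (1-t)h(0)+t\,h(1)$ rearranges to
\begin{align*}
\frac{h(t)-h(0)}{t}\le h(1)-h(0)=G[q]-G[p].
\end{align*}
A standard property of convex functions of one real variable is that the difference quotient on the left is nondecreasing in $t$, which guarantees that its one-sided limit as $t\downarrow 0$ exists.

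The last step is to identify this one-sided derivative with the integral appearing in the lemma. Writing $(1-t)p+tq=p+t\eta$ with $\eta\eqdef q-p$, the definition of the functional derivative gives
\begin{align*}
\lim_{t\downarrow 0}\frac{h(t)-h(0)}{t}=\left.\frac{d}{d\epsilon}G[p+\epsilon\eta]\right|_{\epsilon=0}=\int\frac{\delta G[p]}{\delta p(z)}(p(z),z)(q(z)-p(z))\dmu(z),
\end{align*}
and passing to the limit in the difference quotient inequality yields the claim. The only subtle point worth flagging is that $\eta=q-p$ is a signed function rather than a probability density, so one has to be sure that the differentiability hypothesis on $G$ is compatible with this direction; because $p+t\eta=(1-t)p+tq\in\P$ for all $t\in[0,1]$, only admissible arguments are probed along the segment, and the definition of the functional derivative as stated in the excerpt applies directly.
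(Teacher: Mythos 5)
Your proposal is correct and follows essentially the same route as the paper's proof: both rearrange the convexity inequality into the difference quotient bound $\frac{G[(q-p)t+p]-G[p]}{t}\leq G[q]-G[p]$ and then let $t\downarrow 0$, identifying the limit with the functional derivative in the direction $\eta=q-p$. Your extra remarks on the monotonicity of the difference quotient and on $\eta$ being a signed function are sensible refinements but do not change the argument.
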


\begin{proof}
From (\ref{def_eq_strictly_functional1}), it follows that
\begin{align}
\label{eq_lem_convex1}
G[q]-G[p]\geq \frac{G[(q-p)t+p]-G[p]}{t},
\end{align}
where $t\in(0,1]$.
In the limit $t\downarrow 0$, from the differentiability of $G$ and the definition of the functional derivative for $\eta=q-p$, we have
\begin{align}
\label{eq_lem_convex2}
\lim_{t\downarrow 0}\frac{G[(q-p)t+p]-G[p]}{t}=\left.\frac{d}{d\epsilon}G[p+\epsilon (q-p)]\right|_{\epsilon=0}=\int  \frac{\delta G[p]}{\delta p(z)}(p(z),z)(q(z)-p(z)) \dmu(z).
\end{align}
By combining (\ref{eq_lem_convex1}) and (\ref{eq_lem_convex2}), the result follows.
\end{proof}

\begin{proof}[Proof of Theorem \ref{th_optimization}]                                                                              
First, we prove the uniqueness of the minimizer.
Suppose that there exist two different minimizers $p^\ast_1$ and $p^\ast_2$.
Since $\D$ is a convex set, it follows that $\frac{1}{2}p^\ast_1+\frac{1}{2}p^\ast_2\in\D$.
From the strictly convexity of $F$, it follows that 
\begin{align}
F[p^\ast_1]=\frac{1}{2}F[p^\ast_1]+\frac{1}{2}F[p^\ast_2]>F\biggl[\frac{1}{2}p^\ast_1+\frac{1}{2}p^\ast_2\biggr],
\end{align}
where we use $F[p^\ast_1]=F[p^\ast_2]$.
This contradicts that $p^\ast_1$ is a minimizer.
Hence, the minimizer is unique.

Next, we prove that $p^\ast$ is a minimizer.
We introduce a function $\theta: \Omega\rightarrow\mathbb{R}$ as follows.
\begin{numcases}
{}
\label{proof_th1_eq1}
\theta(z) = 0, \mbox{\rm{ if }} z\in\Lambda, &\\ 
\label{proof_th1_eq2}
\theta(z) = \frac{\delta L[p]}{\delta p(z)}(0,z,\lambda,\nu) > 0, \mbox{\rm{ if }} z\in\Omega\setminus\Lambda. &
\end{numcases}
The inequality in the right hand side in (\ref{proof_th1_eq2}) comes from (\ref{th_optimization_eq2}).
We define a modified Lagrangian as
\begin{align}
\tilde{L}[p](\lambda,\nu)\eqdef L[p](\lambda,\nu)-\int \theta(z)p(z)\dmu(z).
\end{align}
$\theta(z)$ are the Lagrange multipliers for the constraints $p(z)\geq 0$ for all $z\in\Omega$.
From the convexity of $F$ and $\Psi_j$, the linearity of $\Phi_i$, and $\nu_j\geq 0$, the modified Lagrangian $\tilde{L}[p](\lambda,\nu)$ is a convex functional.
By applying Lemma \ref{lem_convex}, it follows that
\begin{align}
\label{proof_th1_eq3}
\tilde{L}[q](\lambda,\nu)\geq \tilde{L}[p^\ast](\lambda,\nu)+\int  \frac{\delta \tilde{L}[p]}{\delta p(z)}(p^\ast(z),z,\lambda,\nu)(q(z)-p^\ast(z)) \dmu(z),
\end{align}
where $q$ is an arbitrary probability density in $\D$.
From the definition of $\tilde{L}$, we have
\begin{align}
\label{proof_th1_eq4}
\frac{\delta\tilde{L}[p]}{\delta p(z)}(p(z),z,\lambda,\nu)=\frac{\delta L[p]}{\delta p(z)}(p(z),z,\lambda,\nu)-\theta(z).
\end{align}
When $z\in\Lambda$, from (\ref{th_optimization_eq1}) and (\ref{proof_th1_eq1}), it follows that 
\begin{align}
\frac{\delta\tilde{L}[p]}{\delta p(z)}(p^\ast(z),z,\lambda,\nu)=\frac{\delta L[p]}{\delta p(z)}(p^\ast(z),z,\lambda,\nu)=0.
\end{align}
When $z\in\Omega\setminus\Lambda$, from (\ref{th_optimization_eq2}) and (\ref{proof_th1_eq2}) it follows that 
\begin{align}
\frac{\delta\tilde{L}[p]}{\delta p(z)}(p^\ast(z),z,\lambda,\nu)=\frac{\delta L[p]}{\delta p(z)}(0,z,\lambda,\nu)-\theta(z)=0.
\end{align}
From these equalities, it follows that
\begin{align}
\label{proof_th1_eq5}
\frac{\delta\tilde{L}[p]}{\delta p(z)}(p^\ast(z),z,\lambda,\nu)=0
\end{align}
for all $z\in\Omega$.
Substituting (\ref{proof_th1_eq5}) into (\ref{proof_th1_eq3}) gives that 
\begin{align}
\label{proof_th1_eq6}
\tilde{L}[q](\lambda,\nu)\geq \tilde{L}[p^\ast](\lambda,\nu).
\end{align}
By combining (\ref{th_optimization_eq2}) and (\ref{proof_th1_eq1}), it follows that
\begin{align}
\label{proof_th1_eq7}
\theta(z)p^\ast(z)=0
\end{align}
for all $z\in\Omega$.
By combining this equality, (\ref{th_optimization_eq3}), and the definitions of $\D$, it follows that
\begin{align}
\label{proof_th1_eq8}
\tilde{L}[p^\ast](\lambda,\nu)=F[p^\ast] + \sum_{i=1}^{m+1} \lambda_i \Phi_i[p^\ast] +\sum_{j=1}^n \nu_j \Psi_j[p^\ast] -\int \theta(z)p^\ast(z)\dmu(z)=F[p^\ast],
\end{align}
where we use $\Phi_i[p^\ast]=\nu_j \Psi_j[p^\ast]=0$ for all $i$ and $j$.

On the other hand, by combining (\ref{proof_th1_eq1}), (\ref{proof_th1_eq2}), and $q(z)\geq 0$, it follows that
\begin{align}
\label{proof_th1_eq9}
\theta(z)q(z)\geq 0.
\end{align}
By combining this equality, (\ref{th_optimization_eq3}), and the definitions of $\D$, it follows that
\begin{align}
\label{proof_th1_eq10}
\tilde{L}[q](\lambda,\nu)=F[q] +\sum_{i=1}^{m+1} \lambda_i \Phi_i[q] +\sum_{j=1}^n \nu_j \Psi_j[q] -\int \theta(z)q(z)\dmu(z)\leq F[q],
\end{align}
where we use  $\Phi_i[q]=0$ and $\nu_j \Psi_j[q]\leq 0$ for all $i$ and $j$.
By combining (\ref{proof_th1_eq6}),  (\ref{proof_th1_eq8}) and  (\ref{proof_th1_eq10}), it follow that 
\begin{align}
F[q]\geq F[p^\ast].
\end{align}
Since $q$ is an arbitrary probability $\D$, $p^\ast$ is a minimizer.
\end{proof}

From (\ref{th_optimization_eq3}), (\ref{proof_th1_eq1}), (\ref{proof_th1_eq2}),  
 (\ref{proof_th1_eq5}), (\ref{proof_th1_eq7}), and $p^\ast\in\D$, $p^\ast$, $\lambda_i$, $\nu_j$, and $\theta(z)$ satisfy the following KKT conditions.
\begin{align}
\mbox{\rm{a. }} &p^\ast\in\D, \\ 
\mbox{\rm{b. }} &\frac{\delta\tilde{L}[p]}{\delta p(z)}(p^\ast(z),z,\lambda,\nu)=0, \\
\mbox{\rm{c. }} &\nu_j\Psi_j[p^\ast]=0, \mbox{\rm{ and }} \nu_j\geq 0, \: j=1,2,\cdots, n \\
\mbox{\rm{d. }} &\theta(z)p^\ast(z)=0, \mbox{\rm{ and }} \theta(z) \geq 0, \,\forall z \in\Omega.
\end{align}

\begin{proof}[Proof of Corollary \ref{cor_optimization}]
Suppose that $\frac{\delta L[p]}{\delta p(z)}(0,z_0,\lambda,\nu)\leq0$ for $z_0\in\Omega\setminus\Lambda$.
From the assumptions, since $\frac{\delta L[p]}{\delta p(z)}(+\infty,z_0,\lambda,\nu)>0$ and $\frac{\delta L[p]}{\delta p(z)}(p(z_0),z_0,\lambda,\nu)$ is continuous with respect to $p(z_0)$, there exist $\hat{p}(z_0)\in[0,\infty)$ that satisfies $\frac{\delta L[p]}{\delta p(z)}(\hat{p}(z_0),z_0,\lambda,\nu)=0$.
This contradicts the definition of $\Lambda$.
Hence, the minimizer conditions in Theorem \ref{th_optimization} and Corollary \ref{cor_optimization} are equivalent.
\end{proof}
\section{Examples of strictly convex functionals}
We show some examples of differentiable strictly convex functionals.
These functionals can be inequality constraints as well as objective functionals.
Let $P,Q$ are probability measures and $p,q$ are $\mu$-densities of $P,Q$.
\begin{example}[Shannon entropy]
The Shannon entropy is defined as 
\begin{align}
h(p)\eqdef -\int p\log p\dmu. \nonumber
\end{align}
The negative Shannon entropy is strictly convex functional.
Let $F[p]=-h(p)$.
The functional derivative is 
\begin{align}
\frac{\delta F[p]}{\delta p(z)}(p(z))=\log p(z) +1.\nonumber
\end{align} 
\end{example} 

\begin{example}[Relative entropy]
The relative entropy (Kullback-Leibler divergence) is defined as
\begin{align}
D(P\|Q)\eqdef \int p\log\frac{p}{q} \dmu.\nonumber
\end{align}
The relative entropy is strictly convex functional in both arguments.
Let $F[p]=D(P\|Q)$ and $F[q]=D(P\|Q)$.
The functional derivative is 
\begin{align}
\frac{\delta F[p]}{\delta p(z)}(p(z))=\log \frac{p(z)}{q(z)} + 1, \nonumber \\ 
\frac{\delta F[q]}{\delta q(z)}(q(z))=-\frac{p(z)}{q(z)}. \nonumber
\end{align} 
\end{example}

\begin{example}[$f$-divergence]
Let $f: \mathbb{R}\rightarrow \mathbb{R}$ be a strictly convex function and $f(1)=0$.
The $f$-divergence is defined as
\begin{align}
D_f(P\|Q)\eqdef \int qf\biggl(\frac{p}{q}\biggr) \dmu. \nonumber
\end{align}
The $f$-divergence is strictly convex in both arguments and include the relative entropy.
Let $F[p]=D_f(P\|Q)$ and $F[q]=D_f(P\|Q)$.
If $f$ is differentiable, the functional derivatives are
\begin{align}
\frac{\delta F[p]}{\delta p(z)}(p(z))=f'\biggl(\frac{p(z)}{q(z)}\biggr), \nonumber
\end{align} 
\begin{align}
\frac{\delta F[q]}{\delta q(z)}(q(z))=\tilde{f}'\biggl(\frac{q(z)}{p(z)}\biggr), \nonumber
\end{align} 
where $\tilde{f}(x)\eqdef xf\bigl(\frac{1}{x}\bigr)$.
\end{example}

\begin{example}[Bregman divergence \cite{bregman1967relaxation}]
\label{ex_bregman}
Let $f: \mathbb{R}\rightarrow \mathbb{R}$ be a differentiable and strictly convex function.
The Bregman divergence is defined as
\begin{align}
D_{\mathrm{B}}(P\|Q)\eqdef \int f(p) \dmu-\int f(q)\dmu -\int f'(q)(p-q) \dmu, \nonumber
\end{align}
where $f'(x)$ denotes the derivative of $f$.
The Bregman divergence is strictly convex in the first argument.
Let $F[p]=D_{\mathrm{B}}(P\|Q)$.
The functional derivative is 
\begin{align}
\frac{\delta F[p]}{\delta p(z)}(p(z))=f'(p(z))-f'(q(z)). \nonumber
\end{align} 
\end{example}

\begin{example}[R{\'e}nyi-divergence \cite{van2014renyi}]
For $0<\alpha<\infty$, the R{\'e}nyi-divergence is defined as
\begin{align}
D_\alpha(P\|Q)\eqdef \frac{1}{\alpha-1}\log\int p^\alpha q^{1-\alpha} \dmu \mbox{ for } \alpha \neq 1, \nonumber \\
D_1(P\|Q)\eqdef\int p\log\frac{p}{q} \dmu. \nonumber
\end{align}
The R{\'e}nyi divergence is strictly convex in the second argument for $0<\alpha<\infty$ (see \cite{van2014renyi}).
Let $F[q]=D_\alpha(P\|Q)$.
The functional derivative is
\begin{align}
\frac{\delta F[q]}{\delta q(z)}(q(z))=-\frac{1}{\int p^\alpha q^{1-\alpha} \dmu}{\biggl(\frac{p(z)}{q(z)}\biggr)}^\alpha. \nonumber
\end{align} 
\end{example}

\section{Conclusion}
We have discussed the convex optimization problems on strictly convex functionals of probability densities. We have shown the conditions of minimizers and the uniqueness of minimizer if there exist a minimizer.                                                                                                            
The conditions of minimizers are \\
1. The minimizer $p^\ast(z)$ is equal to the stationary point of the Lagrangian if the stationary point is non-negative. \\
2. If the stationary point is not a non-negative real number, $p^\ast(z)=0$ and the functional derivative of the Lagrangian at $p^\ast(z)=0$ is positive. \\
3. The Lagrange multipliers corresponding to inequality constraints is non-negative, and the products of the Lagrange multipliers and the functionals corresponding to the inequality constraint are equal to $0$.

\bibliography{reference_optimization}
\end{document}